\newtheorem{lemma}{Lemma}
\newtheorem{proposition}{Proposition}
\newtheorem{theorem}{Theorem}
\newcommand{\footremember}[2]{%
    \footnote{#2}
    \newcounter{#1}
    \setcounter{#1}{\value{footnote}}%
}
\author{%
  Yizhou Cao\footremember{c}
  {Derivation Technology Ltd.,
  yizhou.cao@derivation.info}%
  \\\and Yepeng Ding\footremember{d}
  {Hiroshima University,
  yepengd@acm.org}%
  \\\and Ruichao Jiang \footremember{j}
  {Derivation Technology Ltd., ruichao.jiang@derivation.info}%
  \\\and Long Wen\footremember{w}
  {The Hong Kong Polytechnic University, long1wen@polyu.edu.hk}
}
\title{Chasing price drains liquidity}
\date{\today}
\begin{document}
    \maketitle
    \begin{abstract}
        Assuming that the price in a Uniswap v3 style Automated Market Maker (AMM) follows a Geometric Brownian Motion (GBM), we prove that the strategy that adjusts the position of liquidity to track the current price leads to a deterministic and exponentially fast decay of liquidity. Next, assuming that there is a Centralized Exchange (CEX), in which the price follows a GBM and the AMM price mean reverts to the CEX price, we show numerically that the same strategy still leads to decay. Last, we propose a strategy that increases the liquidity even without compounding fees earned through liquidity provision.
    \end{abstract}
    \section{Introduction} \label{sec:introduction}
AMMs allow Liquidity Providers (LPs) to provide liquidity passively to a trading pair X-Y of token X and token Y. Most AMMs are Constant Formula Market Maker (CFMM), i.e. all trades $(x+\Delta x,y+\Delta y)$, $\Delta x\Delta y<0$ must satisfy the constraint
\begin{equation} \label{eqn:cfmm}
    F(x+\Delta x,y+\Delta y)=F(x,y)
\end{equation}
for some function $F$ called the curve. CFMM quotes price of X in Y by implicit differentiation
\begin{equation}
    Z=-\frac{\mathrm{d}y}{\mathrm{d}x}=\frac{\partial F/\partial x}{\partial F/\partial y}.
\end{equation}
For example,Uniswap v2 uses curve $F(x,y)=xy$. Hence, the price of X in $Y$ in Uniswap v2 is $Z=\frac{y}{x}$, which satisfies a basic property: The less the X, the pricier it gets.

Most AMMs require that liquidity provision/withdrawal preserve the current price, i.e.
\begin{equation} \label{eqn:constnat-price}
    Z(x+\Delta x,y+\Delta y)=Z(x,y)
\end{equation}
for all $\Delta x\Delta y>0$.

For example, in Uniswap v2, \Cref{eqn:constnat-price} translates to $\Delta y=Z\Delta x$, i.e. equal value of X and Y be provided/withdrawn.

    \section{Range liquidity in Uniswap v3}
Consider a Uniswap v3 style AMM with the current price $Z$ of X in Y. To provide $L$ amount of liquidity over a price range $[Z_l,Z_r]$, the LP deposits
\begin{equation} \label{eqn:liquidity-provision}
    (X,Y)=\begin{cases}
        \left(L\left(\sqrt{\frac{1}{Z_l}}-\sqrt{\frac{1}{Z_r}}\right),0\right),& \text{if }Z<Z_l\\
        \left(L\left(\sqrt{\frac{1}{Z}}-\sqrt{\frac{1}{Z_r}}\right),L\left(\sqrt{Z}-\sqrt{Z_l}\right)\right),& \text{if }Z_l\leq Z\leq Z_r\\
        \left(0,L(\sqrt{Z_r}-\sqrt{{Z_l}})\right),& \text{if }Z_r<Z
\end{cases}
\end{equation}
amount of X and Y tokens, respectively \cite{milionis}.

\Cref{eqn:liquidity-provision} implies that an $l$ amount of liquidity over $(0,+\infty)$ can be decomposed as follows.
\begin{proposition}[Liquidity decomposition] \label{proposition:liquidity-decomposition}
    Let the current AMM price be $Z$ and a trade moves the price to $Z'$. Then for any interval $(a,b)\ni Z,Z'$, $l$ can be decomposed as three range liquidities with value $l$ over $(0,a)$, $(a,b)$, and $(b,\infty)$.
\end{proposition}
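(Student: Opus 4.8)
The plan is to reduce everything to the explicit token-holding formula \eqref{eqn:liquidity-provision}, first making precise what ``$l$ amount of liquidity over $(0,+\infty)$'' means. Letting $Z_l \to 0^+$ and $Z_r \to +\infty$ in the middle case of \eqref{eqn:liquidity-provision} with $Z_l \le Z \le Z_r$, the holdings converge to the finite limit $(X,Y) = (l/\sqrt{Z},\, l\sqrt{Z})$; equivalently, the full-range position with liquidity $l$ is the Uniswap v2 position whose (virtual) reserves satisfy $xy = l^2$ and $y/x = Z$, consistent with the curve discussed in \Cref{sec:introduction}. So the target of the decomposition, at price $Z$, is the pair $(l/\sqrt{Z},\, l\sqrt{Z})$.

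Next I would verify the static identity at the current price $Z$. Because $Z \in (a,b)$, the position of liquidity $l$ on $(0,a)$ sits in the regime $Z > Z_r = a$, so its holdings are $(0,\, l\sqrt{a})$; the position on $(b,+\infty)$ sits in the regime $Z < Z_l = b$, so its holdings are $(l/\sqrt{b},\, 0)$; and the position on $(a,b)$ is in the middle regime, with holdings $\bigl(l(1/\sqrt{Z} - 1/\sqrt{b}),\, l(\sqrt{Z} - \sqrt{a})\bigr)$. Summing, the $X$-components telescope to $l/\sqrt{Z}$ and the $Y$-components telescope to $l\sqrt{Z}$, which matches the full-range holdings. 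Hence the decomposition is valid as a statement about token holdings at price $Z$, and since it is a pure rearrangement of holdings at a fixed price it is trivially compatible with \eqref{eqn:constnat-price}.

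Then I would account for the trade $Z \to Z'$, which is where the hypothesis $Z' \in (a,b)$ is used. After the trade the two outer positions are still out of range ($Z' > a$ and $Z' < b$), so the same evaluation shows their holdings are unchanged; thus the entire token flow of the trade is absorbed by the middle position on $(a,b)$, whose flow is $\bigl(l(1/\sqrt{Z'} - 1/\sqrt{Z}),\, l(\sqrt{Z'} - \sqrt{Z})\bigr)$ — exactly the flow of the full-range position moving from $Z$ to $Z'$. Equivalently, since a single CFMM trade moves the price monotonically and the price path therefore stays inside the interval $(a,b)$, the active (in-range) liquidity seen by the trade equals $l$ at every point of the path in both configurations, so the two configurations produce identical price impact and identical token flows.

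The only genuinely delicate point is the claim that a trade interacts with a collection of overlapping range positions only through their pooled in-range liquidity, so that along a price path on which the set of in-range positions is constant the position on $(a,b)$ is indistinguishable from a full-range position of the same liquidity $l$. This additivity is the design invariant of Uniswap v3 and can be cited from \cite{milionis}; alternatively it follows directly from \eqref{eqn:liquidity-provision} by the observation above that a trade confined to $(a,b)$ leaves the outer positions' holdings untouched. Everything else is the routine telescoping computation indicated above.
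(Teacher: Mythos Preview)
Your proposal is correct and follows essentially the same two-step plan as the paper: first the telescoping verification that the three range positions' holdings sum to $(l/\sqrt{Z},\,l\sqrt{Z})$, then the check that the trade's token flow on the full-range position coincides with that on the $(a,b)$ position. The only cosmetic difference is that for the second step the paper computes $\Delta x$ and $\Delta y$ via the CFMM trade formula with virtual reserves, whereas you obtain the same quantities by re-evaluating \eqref{eqn:liquidity-provision} at $Z'$ and subtracting; both amount to the same one-line calculation.
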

\begin{proof}
    There are two things to prove: First, the amount of underlying tokens are the same. Second, the effect of the trade over the whole price range $(0,+\infty)$ is identical to the trade over the active liquidity range $(a,b)$.
    
    The first part follows from
    \begin{equation*}
        \begin{split}
            x(l,Z,(0,\infty))&=\frac{l}{\sqrt{Z}}=0+l\left(\frac{1}{\sqrt{Z}}-\frac{1}{\sqrt{b}}\right)+l\left(\frac{1}{\sqrt{b}}-\frac{1}{\sqrt{\infty}}\right)\\
            &=x(l,Z,(0,a))+x(l,Z,(a,b))+x(l,Z,(b,\infty)),\\
            y(l,Z,(0,\infty))&=l\sqrt{Z}=l(\sqrt{a}-\sqrt{0})+l(\sqrt{Z}-\sqrt{a})+0\\
            &=y(l,Z,(0,a))+y(l,Z,(a,b))+y(l,Z,(b,\infty)).
        \end{split}
    \end{equation*}
    For the second part, assume WLOG $Z<Z'$. If the trade uses $l$ over $(0,+\infty)$, we have
    \begin{equation*}
        \begin{split}
            \Delta x&=\left(1-\sqrt{\frac{Z}{Z'}}\right)x(l,Z,(0,\infty)),\\
            \Delta y&=\left(\sqrt{\frac{Z'}{Z}}-1\right)y(l,Z,(0,\infty)).
        \end{split}
    \end{equation*}
    If the trade uses $l$ over $(a,b)$, we have
    \begin{equation*}
        \begin{split}
            \Delta x&=\left(1-\sqrt{\frac{Z}{Z'}}\right)\left(x(l,Z,(a,b))+\frac{l}{\sqrt{b}}\right)\\
            &=\left(1-\sqrt{\frac{Z}{Z'}}\right)\left[l\left(\frac{1}{\sqrt{Z}}-\frac{1}{\sqrt{b}}\right)+\frac{l}{\sqrt{b}}\right]\\
            &=\left(1-\sqrt{\frac{Z}{Z'}}\right)x(l,Z,(0,\infty)),\\
            \Delta y&=\left(\sqrt{\frac{Z'}{Z}}-1\right)\left(y(l,Z,(a,b))+l\sqrt{a}\right)\\
            &=\left(\sqrt{\frac{Z'}{Z}}-1\right)(l(\sqrt{Z}-\sqrt{a})+l\sqrt{a})\\
            &=\left(\sqrt{\frac{Z'}{Z}}-1\right)y(l,Z,(0,\infty)).
        \end{split}
    \end{equation*}
\end{proof}
\Cref{proposition:liquidity-decomposition} implies that overlapping range liquidities can be added.
    \section{Chasing liquidity dynamics} \label{sec:liquidity-process}
Assume that 
\begin{enumerate}
    \item $Z_t$ is continuous.
    \item For simplicity, at $t+dt$, $Z_{t+dt}$ always falls in $\left[\frac{Z_t}{\alpha},\alpha Z_t\right]$.
\end{enumerate}
Suppose that there is a background liquidity $l$ over $(0,+\infty)$. Consider the following continuous LP strategy.
\begin{enumerate}
    \item Withdraw $L_{t}$ liquidity over $\left[\frac{Z_t}{\alpha},\alpha Z_t\right]$ and obtain $L_{t}\left(\sqrt{\frac{1}{Z_{t+dt}}}-\sqrt{\frac{1}{\alpha Z_{t}}}\right)$ amount of X and $L_t\left(\sqrt{Z_{t+dt}}-\sqrt{\frac{Z_t}{\alpha}}\right)$ amount of Y.
    \item Provide liquidity over $\left[\frac{Z_{t+dt}}{\alpha},\alpha Z_{t+dt}\right]$, which is governed by the following three constraints from Uniswap v3
    \begin{equation*}
        \begin{split}
            &\left(x+\frac{l}{\sqrt{\alpha Z_{t+dt}}}\right)\left(y+l\sqrt{\frac{Z_{t+dt}}{\alpha}}\right)=l^2,\\
            &\left(x+\Delta X+\frac{l+L_{t+dt}}{\sqrt{\alpha Z_{t+dt}}}\right)\left[y+\Delta Y+(l+L_{t+dt})\sqrt{\frac{Z_{t+dt}}{\alpha}}\right]=(l+L_{t+dt})^2,\\
            &Z_{t+dt}=\frac{y+l\sqrt{\frac{Z_{t+dt}}{\alpha}}}{x+\frac{l}{\sqrt{\alpha Z_{t+dt}}}}=\frac{y+\Delta Y+(l+L_{t+dt})\sqrt{\frac{Z_{t+dt}}{\alpha}}}{x+\Delta x+\frac{l+L_{t+dt}}{\sqrt{\alpha Z_{t+dt}}}},
        \end{split}
    \end{equation*}
    and the following self-financing condition
    \begin{equation}
    \label{eqn:self-financing}
        \begin{split}
            \Delta X&=L_{t}\left(\sqrt{\frac{1}{Z_{t+dt}}}-\sqrt{\frac{1}{\alpha Z_t}}\right)+\delta X,\\
            \Delta Y&= L_{t}\left(\sqrt{Z_{t+dt}}-\sqrt{\frac{Z_t}{\alpha}}\right)+\delta Y,\\
            \delta Y&=-P_{t+dt}\delta X,
        \end{split}
    \end{equation}
\end{enumerate}
where $P_{t+dt}$ is the price of buying $\delta Y$ amount of Y at $t+dt$.

The solution is
\begin{equation} \label{eqn:liquidity-update}
    L_{t+dt}=\frac{\frac{P_{t+dt}}{\sqrt{Z_{t+dt}}}+\sqrt{Z_{t+dt}}-\sqrt{\frac{1}{\alpha}}\left(\frac{P_{t+dt}}{\sqrt{Z_t}}+\sqrt{Z_t}\right)}{\frac{P_{t+dt}}{\sqrt{Z_{t+dt}}}+\sqrt{Z_{t+dt}}}\frac{L_t}{1-\sqrt{\frac{1}{\alpha}}}.
\end{equation}
    \section{Exogenous market model} \label{sec:exogenous-price}
Assume that
\begin{enumerate}
    \item The AMM price follows a GBM described by the following Stochastic Differential Equation (SDE)
        \begin{equation*}
        dZ_t=\mu Z_tdt+\sigma Z_tdW_t,
        \end{equation*}
        where $W_t$ is a standard Brownian motion.
    \item The exchange price coincides with the AMM price, i.e. $P_t=Z_t$.
\end{enumerate}
The assumption $P_t=Z_t$ seems contradictory because trading in AMM incurs slippage. But it is nevertheless necessary because otherwise $Z_t$ cannot be modeled exogenously as a GBM if the price slippage in AMM is taken into account. This assumption is justified when the price slippage is negligible when converting between X and Y. This model applies to tokens for which there exists no meaningful CEX.
\begin{theorem} \label{theorem:deterministic-decay}
    Under all assumptions in \Cref{sec:liquidity-process} and \Cref{sec:exogenous-price}, the liquidity process satisfies
    \begin{equation*}
        L_t=L_0\exp\left[-\frac{\sigma^2t}{8(\sqrt{\alpha}-1)}\right].
    \end{equation*}
\end{theorem}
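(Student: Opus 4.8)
The plan is to substitute $P_{t+dt}=Z_{t+dt}$ into the liquidity update rule \Cref{eqn:liquidity-update} and pass to the continuous-time limit, treating $\alpha = \alpha(dt) \to 1$ as $dt \to 0$ in a way consistent with assumption 2 of \Cref{sec:liquidity-process}. With $P = Z$, the factor $\frac{P_{t+dt}}{\sqrt{Z_{t+dt}}}+\sqrt{Z_{t+dt}}$ collapses to $2\sqrt{Z_{t+dt}}$, and similarly the term involving $Z_t$ becomes $2\sqrt{Z_t}$, so \Cref{eqn:liquidity-update} simplifies to
\begin{equation*}
    L_{t+dt}=\frac{2\sqrt{Z_{t+dt}}-\frac{2}{\sqrt{\alpha}}\sqrt{Z_t}}{2\sqrt{Z_{t+dt}}}\cdot\frac{L_t}{1-\frac{1}{\sqrt{\alpha}}}=\frac{1-\sqrt{\frac{Z_t}{\alpha Z_{t+dt}}}}{1-\frac{1}{\sqrt{\alpha}}}\,L_t.
\end{equation*}
Writing $L_{t+dt}/L_t = 1 + dL_t/L_t$, the whole problem reduces to computing the expansion of $\bigl(1-\sqrt{Z_t/(\alpha Z_{t+dt})}\bigr)\big/\bigl(1-1/\sqrt{\alpha}\bigr)$ to the relevant order in $dt$.

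**The key steps**, in order, are: (i) introduce $\beta := 1/\sqrt{\alpha}$, so the prefactor is $(1-\beta\sqrt{Z_t/Z_{t+dt}})/(1-\beta)$; (ii) use Itô's lemma on $\sqrt{Z_t}$ for the GBM $dZ_t = \mu Z_t\,dt + \sigma Z_t\,dW_t$ to get $\sqrt{Z_t/Z_{t+dt}} = 1 - \tfrac12(\sigma\,dW_t) - \tfrac12(\mu - \tfrac34\sigma^2)\,dt + \tfrac18\sigma^2\,dt + o(dt)$ — more carefully, from $d(\ln Z_t)=(\mu-\tfrac12\sigma^2)dt+\sigma dW_t$ one gets $Z_{t+dt}/Z_t = \exp[(\mu-\tfrac12\sigma^2)dt + \sigma dW_t]$, hence $\sqrt{Z_t/Z_{t+dt}} = \exp[-\tfrac12(\mu-\tfrac12\sigma^2)dt - \tfrac12\sigma dW_t]$, which expands (using $(dW_t)^2 = dt$) to $1 - \tfrac12\sigma dW_t - \tfrac12(\mu-\tfrac12\sigma^2)dt + \tfrac18\sigma^2 dt + o(dt)$; (iii) substitute to obtain $1 - \beta\sqrt{Z_t/Z_{t+dt}} = (1-\beta) + \beta\bigl[\tfrac12\sigma\,dW_t + \tfrac12(\mu-\tfrac12\sigma^2)\,dt - \tfrac18\sigma^2\,dt\bigr] + o(dt)$, so the prefactor equals $1 + \tfrac{\beta}{1-\beta}\bigl[\tfrac12\sigma\,dW_t + (\tfrac12\mu - \tfrac14\sigma^2 - \tfrac18\sigma^2)\,dt\bigr] + o(dt)$; (iv) read off $dL_t/L_t$ and take expectations — but here is the subtlety: the result is \emph{deterministic}, so the $dW_t$ term must not survive. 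The resolution is that the stochastic increment in one step is of order $\sqrt{dt}$, but $L_{t+dt}$ must be well-defined as a limit, and squaring/iterating the multiplicative updates over $[0,t]$ converts $\mathbb{E}[(dW_t)^2]=dt$ contributions into the drift; equivalently, one recognizes $\prod(1 + c\,dW_{t_i} + \ldots)$ and takes the limit of $\ln L_t = \sum \ln(1 + \ldots)$, where $\ln(1+c\,dW + \ldots) = c\,dW - \tfrac12 c^2\,dt + \ldots$, and the sum of $c\,dW_{t_i}$ has variance $\to 0$ in the appropriate scaling while $-\tfrac12 c^2\,dt$ accumulates.

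**The main obstacle** — and the step I would be most careful about — is pinning down the scaling relationship between $\alpha$ (equivalently $\beta$) and $dt$, because $\beta/(1-\beta)$ blows up as $\beta \to 1$ ($\alpha\to 1$) and $\tfrac12\sigma\,dW_t \sim \sqrt{dt}$, so the product $\tfrac{\beta}{1-\beta}\cdot\tfrac12\sigma\,dW_t$ need not be small unless $1-\beta$ is bounded below. I expect the intended reading is that $\alpha$ is a \emph{fixed} constant $>1$ (the LP rebalances across a fixed multiplicative band), in which case over a time interval of length $dt$ the constraint $Z_{t+dt}\in[Z_t/\alpha,\alpha Z_t]$ is automatically satisfied for small $dt$, the prefactor is $1 + O(\sqrt{dt})$, and the martingale part vanishes in the limit: $\ln L_t - \ln L_0 = -\tfrac12\bigl(\tfrac{\beta}{1-\beta}\bigr)^2\int_0^t (\tfrac12\sigma)^2\,ds + (\text{drift terms that must cancel})$. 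Matching the claimed exponent $-\sigma^2 t/[8(\sqrt{\alpha}-1)]$ forces $\bigl(\tfrac{\beta}{1-\beta}\bigr)^2\cdot\tfrac{\sigma^2}{8}$ together with the $O(dt)$ drift pieces to collapse to $\tfrac{\sigma^2}{8(\sqrt\alpha-1)}$; since $\tfrac{\beta}{1-\beta} = \tfrac{1}{\sqrt\alpha-1}$, the $-\tfrac12(dW)^2$ Itô correction alone gives $-\tfrac{\sigma^2}{8(\sqrt\alpha-1)^2}$, which does \emph{not} match, so the deterministic drift contributions from step (iii) must supply the remaining factor — I would therefore track the $dt$-terms in (iii) exactly rather than discarding them, verify they contribute $+\tfrac{1}{\sqrt\alpha-1}\bigl(\tfrac12\mu - \tfrac38\sigma^2\bigr)$ minus the Itô correction, and check that the $\mu$-dependence cancels (as it must, since the answer has no $\mu$) leaving precisely $-\sigma^2 t/[8(\sqrt\alpha-1)]$. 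Reconciling these algebraic contributions into the clean stated exponent is the crux of the proof.
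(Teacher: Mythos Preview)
Your very first simplification is wrong, and everything downstream---the stray $dW_t$ term, the speculations about scaling $\alpha$ with $dt$, the mismatched exponent---is a symptom of it. With $P_{t+dt}=Z_{t+dt}$, the denominator $\frac{P_{t+dt}}{\sqrt{Z_{t+dt}}}+\sqrt{Z_{t+dt}}$ does collapse to $2\sqrt{Z_{t+dt}}$, but the numerator term $\frac{P_{t+dt}}{\sqrt{Z_t}}+\sqrt{Z_t}$ does \emph{not} become $2\sqrt{Z_t}$: it becomes $\frac{Z_{t+dt}}{\sqrt{Z_t}}+\sqrt{Z_t}=\sqrt{Z_t}\bigl(\tfrac{Z_{t+dt}}{Z_t}+1\bigr)$. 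After dividing through by $\sqrt{Z_{t+dt}}$ the correct prefactor is
\[
\frac{2-\beta\bigl(\sqrt{Z_{t+dt}/Z_t}+\sqrt{Z_t/Z_{t+dt}}\bigr)}{2(1-\beta)},\qquad \beta=\tfrac{1}{\sqrt{\alpha}},
\]
which is \emph{symmetric} in $R=\sqrt{Z_{t+dt}/Z_t}$ and $1/R$. Writing $R=e^{x/2}$ with $x=\ln(Z_{t+dt}/Z_t)$, one has $R+R^{-1}=2\cosh(x/2)=2+\tfrac14 x^2+o(x^2)$, and since $x^2=\sigma^2\,dt+o(dt)$ the stochastic part drops out entirely: the prefactor is $1-\tfrac{\beta}{1-\beta}\cdot\tfrac{\sigma^2}{8}\,dt=1-\tfrac{\sigma^2}{8(\sqrt{\alpha}-1)}\,dt$. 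No $\mu$, no $dW_t$, no Itô correction to a log---just a clean deterministic ODE for $L_t$ with $\alpha>1$ fixed throughout.

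This is exactly what the paper does, phrased slightly differently: it recognises $L_{t+dt}-L_t$ as proportional to $d\sqrt{Z_t}\cdot d\sqrt{1/Z_t}$, a product of two It\^o differentials, hence pure quadratic variation $-\tfrac14 Z_t^{-2}\,d\langle Z,Z\rangle_t=-\tfrac{\sigma^2}{4}\,dt$. Your attempt to salvage the asymmetric expression by arguing that ``the sum of $c\,dW_{t_i}$ has variance $\to 0$ in the appropriate scaling'' cannot work: with $\alpha$ fixed, $c=\tfrac{\sigma}{2(\sqrt{\alpha}-1)}$ is a constant, and $\sum_{i} c\,\Delta W_{t_i}$ has variance $c^2 t$, not $0$. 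The $dW_t$ term in your expansion is genuinely spurious, an artifact of the algebra slip, and no limiting argument will remove it.
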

\begin{proof}
    With $P_t=Z_t$, \Cref{eqn:liquidity-update} becomes
    \begin{equation*}
        L_{t+dt}=\frac{2-\sqrt{\frac{1}{\alpha}}\left(\sqrt{\frac{Z_{t+dt}}{Z_t}}+\sqrt{\frac{Z_t}{Z_{t+dt}}}\right)}{2}\frac{L_t}{1-\sqrt{\frac{1}{\alpha}}}.
    \end{equation*}
    Hence,
    \begin{equation*}
        \begin{split}
            dL_t&=\frac{d\sqrt{Z_t}\left(\sqrt{\frac{1}{Z_{t+dt}}}-\sqrt{\frac{1}{Z_{t}}}\right)}{2(\sqrt{\alpha}-1)}L_t\\
            &=\frac{d\sqrt{Z_t}d\sqrt{\frac{1}{Z_{t+dt}}}}{2(\sqrt{\alpha}-1)}L_t.
        \end{split}
    \end{equation*}
    By Ito's lemma,
    \begin{equation*}
        \begin{split}
            d\sqrt{Z_t}&=\frac{1}{2}Z_t^{-\frac{1}{2}}dZ_t-\frac{1}{8}Z_t^{-\frac{3}{2}}d\langle Z,Z\rangle t=\frac{1}{2}Z_t^{-\frac{1}{2}}dZ_t-\frac{\sigma^2}{8}\sqrt{Z_t}dt,\\
            d\sqrt{\frac{1}{Z_{t}}}&=-\frac{1}{2}Z_t^{-\frac{3}{2}}dZ_t+\frac{3}{8}Z_t^{-\frac{5}{2}}d\langle Z,Z\rangle t=-\frac{1}{2}Z_t^{-\frac{3}{2}}dZ_t+\frac{3\sigma^2}{8}Z_t^{-\frac{1}{2}}dt.
        \end{split}
    \end{equation*}
    Hence,
    \begin{equation*}
        \begin{split}
            dL_t&=\frac{\left(\frac{1}{2}Z_t^{-\frac{1}{2}}dZ_t-\frac{\sigma^2}{8}\sqrt{Z_t}dt\right)\left(-\frac{1}{2}Z_t^{-\frac{3}{2}}dZ_t+\frac{3\sigma^2}{8}Z_t^{-\frac{1}{2}}dt\right)}{2(\sqrt{\alpha}-1)}L_t\\
            &=-\frac{d\langle Z,Z\rangle t}{8(\sqrt{\alpha}-1)Z_t^2}L_t\\
            &=-\frac{\sigma^2}{8(\sqrt{\alpha}-1)}L_tdt.
        \end{split}
    \end{equation*}
\end{proof}
\Cref{theorem:deterministic-decay} says that the process $L$ is deterministic and $L$ decays exponentially with rate $\lambda=\frac{\sigma^2}{8(\sqrt{\alpha}-1)}$. So the higher the volatility and the more concentrated the liquidity ($\alpha\approx1^+$), the faster the liquidity that chases the current price decays.
    \section{Mean-reverting market model} \label{sec:mean-reverting}
Assume that 
\begin{enumerate}
    \item A CEX price $P_t$ follows a GBM
    \begin{equation*}
        dP_t=\mu P_tdt+\sigma P_tdW_t,
    \end{equation*}
    where $W_t$ is a standard Brownian motion.
    \item Following \cite{cartea}, the AMM price $Z_t$ is modeled by a mean reverting process
    \begin{equation*} \label{eqn:mean-reverting}
        dZ_t=\theta(P_t-Z_t)dt+\gamma Z_tdB_t,
    \end{equation*}
    where $\theta$ is the mean reversion speed parameter, $B_t$ is another standard Brownian motion, independent of $W_t$. 
\end{enumerate}
Note that
\begin{equation}
\label{eqn:independence-increment}
    dP_tdZ_t=\sigma\gamma P_tZ_tdW_tdB_t=0.
\end{equation}
\begin{theorem} \label{theorem:liquidity}
    The dynamics of $L_t$ is
    \begin{equation*}
        dL_t=\frac{\left({\frac{\gamma^2}{8}\frac{P_t^2}{Z_t^2}+\frac{3\gamma^2}{4}\frac{P_t}{Z_t}-\frac{3\gamma^2}{8}}\right)dt+\frac{1}{2}\left(1-\frac{P_t^2}{Z_t^2}\right)\left[\theta\left(\frac{P_t}{Z_t}-1\right)dt+\gamma dB_t\right]}{\left(1+\frac{P_t}{Z_t}\right)^2}\frac{L_t}{\sqrt{\alpha}-1}.
    \end{equation*}
\end{theorem}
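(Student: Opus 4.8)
The plan is to start from the general liquidity update \Cref{eqn:liquidity-update}, substitute nothing for $P_{t+dt}$ this time (unlike the exogenous case where $P_t=Z_t$ collapsed everything), and pass to the continuous-time limit. Writing $R_t = P_t/Z_t$ for the price ratio, I would first rewrite the numerator of \Cref{eqn:liquidity-update} in terms of $R_t$, $R_{t+dt}$, $Z_t$, and $Z_{t+dt}$: the factor $\frac{P}{\sqrt{Z}}+\sqrt{Z} = \sqrt{Z}\,(1+R)$, so the update becomes
\begin{equation*}
    L_{t+dt} = \frac{\sqrt{Z_{t+dt}}(1+R_{t+dt}) - \sqrt{\tfrac{1}{\alpha}}\sqrt{Z_t}(1+R_t)}{\sqrt{Z_{t+dt}}(1+R_{t+dt})}\cdot\frac{L_t}{1-\sqrt{1/\alpha}}.
\end{equation*}
Multiplying top and bottom by $\sqrt{\alpha}$ and rearranging, $dL_t/L_t$ should reduce to $\frac{1}{\sqrt{\alpha}-1}$ times the differential of $\log\big(\sqrt{Z_t}(1+R_t)\big)$ up to the Ito correction — more precisely, the increment of $\sqrt{Z_t}(1+R_t)$ relative to its own value, but with one of the two first-order pieces replaced by a cross term in the spirit of the $d\sqrt{Z_t}\,d\sqrt{1/Z_t}$ manipulation in the proof of \Cref{theorem:deterministic-decay}. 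So the real content is: compute $d\big[\sqrt{Z_t}(1+R_t)\big]$ by Ito, divide by $\sqrt{Z_t}(1+R_t)$, and collect.

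Concretely I would introduce $G_t := \sqrt{Z_t}\,(1+P_t/Z_t) = \sqrt{Z_t} + P_t Z_t^{-1/2}$ and apply the multivariate Ito formula to $G$ as a function of $(Z_t, P_t)$, using $dZ_t = \theta(P_t-Z_t)dt + \gamma Z_t dB_t$, $dP_t = \mu P_t dt + \sigma P_t dW_t$, $d\langle Z,Z\rangle_t = \gamma^2 Z_t^2 dt$, $d\langle P,P\rangle_t = \sigma^2 P_t^2 dt$, and the crucial independence $d\langle P,Z\rangle_t = 0$ from \Cref{eqn:independence-increment}. The partials are $\partial_Z G = \tfrac12 Z^{-1/2} - \tfrac12 P Z^{-3/2}$, $\partial_P G = Z^{-1/2}$, $\partial_{ZZ} G = -\tfrac14 Z^{-3/2} + \tfrac34 P Z^{-5/2}$, $\partial_{PP} G = 0$, $\partial_{ZP} G = -\tfrac12 Z^{-3/2}$. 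Then $dG_t = \partial_Z G\, dZ_t + \partial_P G\, dP_t + \tfrac12\partial_{ZZ}G\, d\langle Z,Z\rangle_t$, and dividing by $G_t = Z_t^{-1/2}(Z_t + P_t)$ turns every term into a rational function of $R_t$. The $\mu$ and $\sigma$ contributions from $dP_t$ must cancel in the final answer (the stated formula has no $\mu$, no $\sigma$, no $dW_t$), which will happen because $dL_t$ is built from the product of an increment in $\sqrt{Z_t}$-type quantities and an increment in $\sqrt{1/Z_t}$-type quantities, not from $G_t$ alone — I expect $dL_t/L_t = \frac{1}{\sqrt{\alpha}-1}\,\frac{d\sqrt{Z_t}\cdot d(\cdots) + \text{drift}}{G_t/\sqrt{Z_t}\cdot(\cdots)}$ where the $P$-diffusion sits only inside a factor that gets multiplied by a vanishing-order term, or is killed by the $dt$-only partner. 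Tracking which first-order increment multiplies which is where I would be most careful.

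The main obstacle is bookkeeping the discrete-to-continuous passage correctly: in the proof of \Cref{theorem:deterministic-decay} the key trick was recognizing $L_{t+dt}-L_t$ as proportional to $d\sqrt{Z_t}\cdot d\sqrt{1/Z_t}$ — a product of two stochastic increments, so only the quadratic-variation part survives. Here the analogous step, keeping $P_{t+dt}$ general, produces $L_{t+dt}-L_t$ proportional to (increment of $\sqrt{Z}(1+R)$) times (increment of $\sqrt{1/Z}$-type quantity), and I must expand the product retaining all $O(dt)$ terms: the Brownian part $\gamma dB_t$ in the numerator survives because it is not multiplied by another increment — it rides on the $\frac12(1-R_t^2)$ coefficient directly — whereas the $\sigma dW_t$ part of $dP_t$ must appear only inside a product with another $O(\sqrt{dt})$ term whose covariation with $W$ vanishes, hence drops. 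I would organize the expansion so the surviving pieces are exactly: a $\gamma^2$-drift from $d\langle B,B\rangle$ matching $\frac{\gamma^2}{8}R^2 + \frac{3\gamma^2}{4}R - \frac{3\gamma^2}{8}$ over $(1+R)^2$; a mean-reversion drift $\frac{\theta}{2}(1-R^2)(R-1)/(1+R)^2$; and the martingale term $\frac{\gamma}{2}(1-R^2)/(1+R)^2\,dB_t$. Finally I would simplify $(1-R^2)/(1+R)^2 = (1-R)/(1+R)$ as a sanity check that the displayed formula is consistent, and verify that setting $P_t=Z_t$ (i.e. $R_t=1$, $\theta\to\infty$ formally, $\gamma=\sigma$) recovers $dL_t = -\frac{\sigma^2}{8(\sqrt\alpha-1)}L_t\,dt$, which it does since $R=1$ makes the numerator $\frac{\gamma^2}{8}+\frac{3\gamma^2}{4}-\frac{3\gamma^2}{8} = \frac{\gamma^2}{2}$ over $(1+1)^2 = 4$, giving $\frac{\gamma^2}{8}$ — wait, sign: the bracket evaluates to $+\frac{\gamma^2}{2}$, so one must check the overall sign convention in \Cref{eqn:liquidity-update} carefully, and I suspect the intended reading has the decay absorbed elsewhere; pinning down that sign is the last thing I would confirm before writing the clean version.
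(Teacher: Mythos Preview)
Your rewriting of the update rule contains a decisive error: the subtracted term in \Cref{eqn:liquidity-update} is $\sqrt{1/\alpha}\bigl(P_{t+dt}/\sqrt{Z_t}+\sqrt{Z_t}\bigr)$, with $P_{t+dt}$ (not $P_t$) divided by the \emph{old} $\sqrt{Z_t}$. Your expression $\sqrt{1/\alpha}\,\sqrt{Z_t}(1+R_t)$ silently replaces $P_{t+dt}$ by $P_t$, so you have lost a term $\sqrt{1/\alpha}\,dP_t/\sqrt{Z_t}$ in the numerator. With that term dropped, $dL_t/L_t$ really is essentially $-G_t\,d(1/G_t)$ for $G_t=P_t/\sqrt{Z_t}+\sqrt{Z_t}$, and then the $\partial_P G\,dP_t$ contribution survives: you would get a $\sigma\,dW_t$ term and a $\mu$- and $\sigma^2$-dependent drift, contradicting the statement. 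The hand-wave that ``the $P$-diffusion sits only inside a factor that gets multiplied by a vanishing-order term'' is not the mechanism and will not go through.

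What the paper does is exactly to track that extra piece. Writing $dL_t = \bigl[-G_t\,d(1/G_t) - (dP_t/\sqrt{Z_t})/G_{t+dt}\bigr]\tfrac{L_t}{\sqrt{\alpha}-1}$, the first bracket produces a first-order $(1+P_t/Z_t)\,dP_t/G_t^2$ term; the paper then shows, via a further expansion using $d(1/G_t^2)$, that the difference $(1+P_t/Z_t)\,dP_t/G_t^2 - (dP_t/\sqrt{Z_t})/G_{t+dt}$ collapses to the pure $dt$-term $\sigma^2 (P_t^2/Z_t)/G_t^2\,dt$, which in turn cancels the $-\sigma^2 P_t^2/Z_t$ piece in the It\^o drift of $-G_t\,d(1/G_t)$. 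That cancellation is the whole point, and it cannot be seen from $dG_t/G_t$ alone. Finally, your sanity check is miscalibrated: with $W\perp B$ (so $dP_t\,dZ_t=0$), setting $R_t\equiv 1$ does \emph{not} reduce to \Cref{theorem:deterministic-decay}; the drift is genuinely $+\gamma^2/8(\sqrt{\alpha}-1)$ here, as the paper itself remarks immediately after the theorem.
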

The proof is in \Cref{appendix-a}.

Let 
\begin{equation*}
    \delta=\frac{P_t-Z_t}{Z_t}>-1
\end{equation*}
be the relative deviation of $P_t$ from $Z_t$, then
\begin{equation*}
    dL_t=\frac{-\frac{\theta}{2}\delta^3-\left(\theta-\frac{\gamma^2}{8}\right)\delta^2+\gamma^2\delta+\frac{\gamma^2}{2}}{(\sqrt{\alpha}-1)(\delta+2)^2}L_tdt-\frac{1}{2}\frac{\gamma\delta}{(\sqrt{\alpha}-1)(\delta+2)}L_tdB_t
\end{equation*}
Let
\begin{equation} \label{eqn:f_theta}
    f(\delta)\coloneqq-\frac{\theta}{2}\delta^3-\left(\theta-\frac{\gamma^2}{8}\right)\delta^2+\gamma^2\delta+\frac{\gamma^2}{2}.
\end{equation}
Then $f(0)=\frac{\gamma^2}{2}>0$, i.e. if $P_t=Z_t$ for all $t$, the drift is strictly positive. This is in contrast to \Cref{sec:exogenous-price}, in which we showed that $L_t$ decays exponentially by assuming $Z_t=P_t$. This is not a contradiction as in \Cref{sec:exogenous-price}, the exchange price $P_t$ was assumed to coincide with $Z_t$ almost surely, therefore $dP_tdZ_t\neq0$, whereas $dP_tdZ_t=0$ in this section.
    \section{Liquidity increasing strategy}
\label{sec:safe-interval}
\begin{lemma}
    $f(\delta)$ is strictly positive in some open neighbourhood $(\delta_l,\delta_r)\ni0$.
\end{lemma}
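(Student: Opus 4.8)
The plan is to treat this as a soft consequence of continuity rather than as a computation. The function $f$ defined in \Cref{eqn:f_theta} is a cubic polynomial in $\delta$ with real coefficients, hence continuous on all of $\mathbb{R}$, in particular at $\delta=0$. Since the volatility parameter satisfies $\gamma>0$, the already-observed value $f(0)=\gamma^2/2$ is strictly positive. I would then invoke the standard sign-preservation property of continuous functions: a function continuous at a point and positive there stays positive on a neighbourhood of that point.

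Concretely, I would spell this out by the $\varepsilon$--$\delta$ definition of continuity at $0$ with tolerance $\varepsilon=\gamma^2/4<f(0)$. This yields an $r>0$ such that $|f(\delta)-f(0)|<\gamma^2/4$ whenever $|\delta|<r$, and for such $\delta$ one gets $f(\delta)>f(0)-\gamma^2/4=\gamma^2/4>0$. Taking $(\delta_l,\delta_r)=(-r,r)\cap(-1,\infty)$ gives an open interval containing $0$ on which $f$ is strictly positive; the intersection with $(-1,\infty)$ merely respects the admissible range $\delta>-1$, and since $0$ is interior to $(-1,\infty)$ it does not remove anything near $0$.

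There is essentially no obstacle here — the claim is exactly the local positivity of a continuous function — so the "hard part" is only deciding how much extra information to include. If one wanted concrete or even sharp endpoints, the natural refinement is to let $\delta_l<0<\delta_r$ be the two real roots of $f$ that immediately bracket $0$: the leading coefficient $-\theta/2<0$ forces $f(\delta)\to-\infty$ as $\delta\to+\infty$, so the intermediate value theorem already produces a root to the right of $0$, and Descartes' rule of signs together with a Cauchy-type root bound would locate (or at least cap) the nearest root on the left. For the existence statement as phrased, however, the continuity argument above is all that is needed, and I would keep the proof to those few lines.
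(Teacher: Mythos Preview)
Your argument is correct: $f$ is a polynomial, $f(0)=\gamma^2/2>0$, and sign-preservation of continuous functions yields an open interval around $0$ on which $f>0$. Nothing is missing for the bare existence claim.

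The paper takes a different and more informative route. It evaluates $f$ at $-\infty$, at $-1$, at $0$, and at $+\infty$, observing the sign pattern $+,-,+,-$; this forces one root in each of $(-\infty,-1)$, $(-1,0)$, and $(0,+\infty)$. Consequently $\delta_l$ and $\delta_r$ are identified as the two \emph{roots} of $f$ nearest to $0$, so $(\delta_l,\delta_r)$ is the \emph{maximal} positivity interval about $0$, not just some interval. This is exactly what the subsequent analysis needs: the paper goes on to approximate $\delta_l$ and $\delta_r$ by solving (an approximation of) $f(\delta)=0$, and the resulting numerical thresholds drive \Cref{algo:liquidity-strategy}. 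Your $\varepsilon$--$\delta$ interval $(-r,r)$ carries no such structural meaning; you do sketch the root-location idea in your last paragraph, but your suggestion of Descartes' rule plus a Cauchy bound for the left root is more cumbersome than the paper's single evaluation $f(-1)=-\theta/2-3\gamma^2/8<0$, which immediately pins $\delta_l\in(-1,0)$ and in particular keeps it in the admissible range $\delta>-1$ without any further intersection step.
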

\begin{proof}
    Since
    \begin{equation*}
        \begin{split}
            &\lim_{\delta\to-\infty}f(\delta)=+\infty,\\
            &f(-1)=-\frac{\theta}{2}-\frac{3}{8}\gamma^2<0,\\
            &\lim_{\delta\to+\infty}f(\delta)=-\infty,
        \end{split}
    \end{equation*}
    $f(\delta)$ has one root in $(-\infty,-1)$, one root in $(-1,0)$, and one root in $(0,+\infty)$. Hence, $f(\delta)$ is strictly positive in an open neighbourhood $(\delta_l,\delta_r)\ni0$.
\end{proof}
If we provide liquidity only if $\delta\in(\delta_l,\delta_r)$, we expect to see an increasing in liquidity. However, doing so introduces jumps in $L_t$. To reconcile this, we use arbitrage to bring $\delta$ back to $(\delta_l,\delta_r)$. However, real arbitrage invalidates our assumption that $Z_t$ be continuous. So the following is only a heuristic.
\begin{enumerate}
    \item At time $t+dt$, we withdraw $L_{t}$ over $\left[\frac{Z_t}{\alpha},\alpha Z_t\right]$ to obtain $L_{t}\left(\sqrt{\frac{1}{Z_{t+dt}}}-\sqrt{\frac{1}{\alpha Z_{t}}}\right)$ amount of X and $L_t\left(\sqrt{Z_{t+dt}}-\sqrt{\frac{Z_t}{\alpha}}\right)$ amount of Y.
    \item If $\delta_{t+dt}\notin(\delta_l,\delta_r)$, we perform arbitrage so that $Z_{t+dt}\approx P_{t+dt}$.
    \item add range liquidity over $\left[\frac{P_{t+dt}}{\alpha},\alpha P_{t+dt}\right]$ subject to
    \begin{equation*}
        \begin{split}
            &\left(x+\frac{l}{\sqrt{\alpha P_{t+dt}}}\right)\left(y+l\sqrt{\frac{P_{t+dt}}{\alpha}}\right)=l^2,\\
            &\left(x+\Delta X+\frac{l+L_{t+dt}}{\sqrt{\alpha P_{t+dt}}}\right)\left[y+\Delta Y+(l+L_{t+dt})\sqrt{\frac{P_{t+dt}}{\alpha}}\right]=(l+L_{t+dt})^2,\\
            &P_{t+dt}=\frac{y+l\sqrt{\frac{P_{t+dt}}{\alpha}}}{x+\frac{l}{\sqrt{\alpha P_{t+dt}}}}=\frac{y+\Delta Y+(l+L_{t+dt})\sqrt{\frac{P_{t+dt}}{\alpha}}}{x+\Delta x+\frac{l+L_{t+dt}}{\sqrt{\alpha P_{t+dt}}}},
        \end{split}
    \end{equation*}
    and the self-financing condition \Cref{eqn:self-financing}.
    \end{enumerate}
We obtain the following update rule
\begin{equation} \label{eqn:add-liquidity}
    L_{t+dt}=\frac{\frac{P_{t+dt}}{\sqrt{Z_{t+dt}}}+\sqrt{Z_{t+dt}}-\sqrt{\frac{1}{\alpha}}\left(\frac{P_{t+dt}}{\sqrt{Z_t}}+\sqrt{Z_t}\right)}{2\sqrt{P_{t+dt}}}\frac{L_t}{1-\sqrt{\frac{1}{\alpha}}}.
\end{equation}
The liquidity provision strategy is summarized in \Cref{algo:liquidity-strategy}
\begin{algorithm}[ht!]
    \caption{Liquidity provision with arbitrage} \label{algo:liquidity-strategy}
    Withdraw $L_{t}$ over $\left[\frac{Z_t}{\alpha},\alpha Z_t\right]$\;
    \If{$\delta_l<\delta_{t+dt}<\delta_r$}{
        Add liquidity over $\left[\frac{Z_{t+dt}}{\alpha},\alpha Z_{t+dt}\right]$ according to \Cref{eqn:liquidity-update}.\;
    }
    \Else{
        Perform arbitrage so that $\delta_{t+dt}=0$\;
        Add liquidity over $\left[\frac{P_{t+dt}}{\alpha},\alpha P_{t+dt}\right]$ according to \Cref{eqn:add-liquidity}
    }
\end{algorithm}

The next task is to determine $\delta_l$ and $\delta_r$. In principle, there is no difficulty as $f(\delta)$ is a cubic polynomial, for which explicit formula exists for its roots but it hinders the relationship between the parameters.

Assume that
\begin{enumerate}
    \item $\gamma^2\ll\theta$. Otherwise, the mean reversion process isn't a good approximation to arbitrageurs.
    \item $\delta\ll\theta$ as we work within a small neighborhood of $0$.
\end{enumerate}
So we drop $-\frac{\theta}{2}\delta^3$ and $\frac{\gamma^2}{8}\delta^2$ in \Cref{eqn:f_theta}:
\begin{equation*}
    f(\delta)\approx-\theta\delta^2+\gamma^2\delta+\frac{\gamma^2}{2}.
\end{equation*}
And the roots are
\begin{equation*}
    \begin{split}
        \delta_l&=\frac{\gamma^2}{2\theta}-\frac{\gamma}{\sqrt{2\theta}}<0,\\
        \delta_r&=\frac{\gamma^2}{2\theta}+\frac{\gamma}{\sqrt{2\theta}}>0.
    \end{split}
\end{equation*}
Hence, as long as
\begin{equation} \label{eqn:range}
    \frac{P_t}{1+\frac{\gamma}{\sqrt{2\theta}}+\frac{\gamma^2}{2\theta}}<Z_t<\frac{P_t}{1-\frac{\gamma}{\sqrt{2\theta}}+\frac{\gamma^2}{2\theta}},
\end{equation}
the liquidity often tends to increase, which is confirmed by simulation in \Cref{sec:numerics}.

Moreover, if the arbitrage intensity $\theta\to+\infty$, \Cref{eqn:range} becomes empty, which is consistent with \Cref{sec:exogenous-price}.
    \section{Numerical result} \label{sec:numerics}
We used Binance ETH-USDC pair from 2024-01-15 to 2024-9-15 to estimate the parameters of $P_t$ and Uniswap v3 on Base blockchain from 2024-02-01 to 2024-07-18 to estimate the parameters of $Z_t$. The estimators are derived in \Cref{sec:estimators}. The results are
\begin{align*}
    \hat{\mu}&=-1.17,\\
    \hat{\sigma}&=0.75,\\
    \hat{\theta}&=1058.49,\\
    \hat{\gamma}&=0.68,
\end{align*}
units in per year. Hence,
\begin{equation*}
    \begin{split}
        \delta_l&\approx-0.014,\\ \delta_r&\approx0.015.
    \end{split}
\end{equation*}
We perform Monte Carlo simulation with initial price 2000, initial liquidity 1000 with $\alpha=1.1$. The simulation contains 1000 rounds and each round lasts 35280 time steps, with step size being 1 min.
\begin{figure}[ht]
    \centering
    \includegraphics[width=\linewidth]{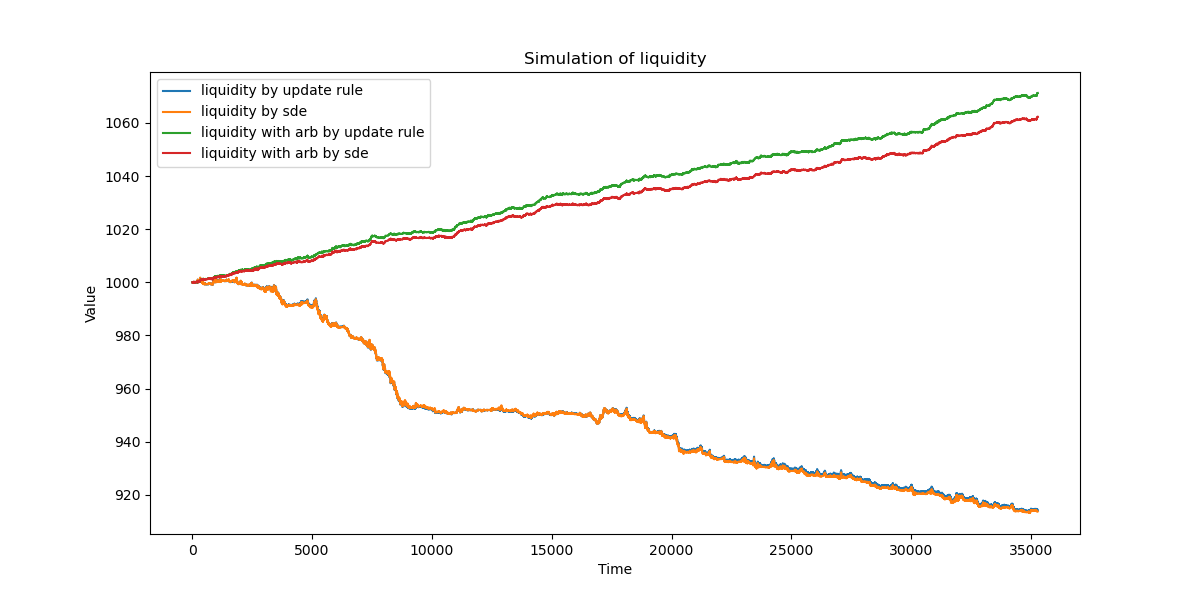}
    \caption{The blue line corresponds to \Cref{eqn:liquidity-update}, orange and red to \Cref{theorem:liquidity}, and green to \Cref{eqn:add-liquidity}.}
    \label{fig:simulation}
\end{figure}

\Cref{fig:simulation} shows that without arbitrage, the liquidity still decays, but the decay is not deterministic. The fact that the blue and the orange lines coincide shows that our derived SDE of $L_t$ (\Cref{theorem:liquidity}) is accurate. With arbitrage, the liquidity increases. However, the discrepancy between the green and the red lines shows that \Cref{theorem:liquidity} becomes inaccurate as performing arbitrage invalidates the continuous AMM price assumption.
    \section{Conclusion}
In this paper, we derived the SDE of the liquidity process induced by the strategy that chases the current price in a Uniswap v3 style AMM under two market models. If the AMM price is modeled as a GBM, we proved that the liquidity decays deterministically and exponentially fast. If the AMM price is modelled as a mean-reverting process, the numerical simulation showed that the liquidity still decays. However, if we provide liquidity according to \Cref{algo:liquidity-strategy}, the numerical simulation showed an increase of liquidity, even without taking fees and potential profit from the arbitrage into account.
    \bibliographystyle{unsrt}
    \bibliography{biblio}
    \begin{appendices}
\section{Proof of \texorpdfstring{\Cref{theorem:liquidity}}{\ref{theorem:liquidity}}} \label{appendix-a}
We evaluate two derivatives first.
\begin{lemma}
\label{lemma:derivative}
    \begin{equation*}
        \begin{split}
            &d\frac{1}{\frac{P_t}{\sqrt{Z_t}}+\sqrt{Z_t}}\\
            =&\frac{\left(\frac{\sigma^2P_t^2}{Z_t}-\frac{\gamma^2}{8}\frac{P_t^2}{Z_t}-\frac{3\gamma^2}{4}P_t+\frac{3\gamma^2}{8}Z_t\right)dt-\left(1+\frac{P_t}{Z_t}\right)dP_t+\frac{1}{2}\left(1-\frac{P_t^2}{Z_t^2}\right)dZ_t}{\left(\frac{P_t}{\sqrt{Z_t}}+\sqrt{Z_t}\right)^3}.
        \end{split}
    \end{equation*}
\end{lemma}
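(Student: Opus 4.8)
The plan is to view $\left(\frac{P_t}{\sqrt{Z_t}}+\sqrt{Z_t}\right)^{-1}$ as a smooth function of the two-dimensional Ito process $(P_t,Z_t)$ and apply the multidimensional Ito formula. Writing
\begin{equation*}
  g(p,z)\coloneqq\frac{1}{\frac{p}{\sqrt{z}}+\sqrt{z}}=\frac{\sqrt{z}}{p+z},
\end{equation*}
so that $g(P_t,Z_t)$ is the quantity of interest, we have
\begin{equation*}
  dg(P_t,Z_t)=g_p\,dP_t+g_z\,dZ_t+\frac{1}{2} g_{pp}\,d\langle P\rangle_t+g_{pz}\,d\langle P,Z\rangle_t+\frac{1}{2} g_{zz}\,d\langle Z\rangle_t.
\end{equation*}
The structural fact that makes this tractable is that $W$ and $B$ are independent, so by \Cref{eqn:independence-increment} the cross-variation vanishes, $d\langle P,Z\rangle_t=\sigma\gamma P_tZ_t\,dW_t\,dB_t=0$, and the $g_{pz}$ term drops out. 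The remaining second-order inputs are $d\langle P\rangle_t=\sigma^2P_t^2\,dt$, coming from $dP_t=\mu P_t\,dt+\sigma P_t\,dW_t$, and $d\langle Z\rangle_t=\gamma^2Z_t^2\,dt$, coming from $dZ_t=\theta(P_t-Z_t)\,dt+\gamma Z_t\,dB_t$; note the mean-reversion drift $\theta(P_t-Z_t)$ contributes nothing at second order, which is why only $\sigma$ and $\gamma$ appear in the correction terms.

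Next I would compute the partials of $g(p,z)=\sqrt{z}\,(p+z)^{-1}$. The $p$-derivatives are immediate: $g_p=-\sqrt{z}\,(p+z)^{-2}$ and $g_{pp}=2\sqrt{z}\,(p+z)^{-3}$. For the $z$-derivatives, $g_z=\frac{1}{2} z^{-1/2}(p+z)^{-1}-\sqrt{z}\,(p+z)^{-2}=\frac{p-z}{2\sqrt{z}\,(p+z)^2}$, and $g_{zz}$ follows by differentiating the triple product $\frac{1}{2}(p-z)\,z^{-1/2}(p+z)^{-2}$ once more in $z$, which produces three terms. Substituting $g_p,g_z,g_{pp},g_{zz}$ together with $d\langle P\rangle_t=\sigma^2P_t^2\,dt$ and $d\langle Z\rangle_t=\gamma^2Z_t^2\,dt$ into the Ito expansion, and then putting everything over the common denominator $\left(\frac{P_t}{\sqrt{Z_t}}+\sqrt{Z_t}\right)^3=(P_t+Z_t)^3Z_t^{-3/2}$, the first-order part collapses to $-\left(1+\frac{P_t}{Z_t}\right)dP_t+\frac{1}{2}\left(1-\frac{P_t^2}{Z_t^2}\right)dZ_t$ while the two second-order terms combine into $\left(\frac{\sigma^2P_t^2}{Z_t}-\frac{\gamma^2}{8}\frac{P_t^2}{Z_t}-\frac{3\gamma^2}{4}P_t+\frac{3\gamma^2}{8}Z_t\right)dt$, which is the claimed identity.

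The only real obstacle is the bookkeeping: $g_{zz}$ is a sum of three monomials in $p,z$ carrying weights $z^{-1/2}$ and $z^{-3/2}$, so one must collect the $\gamma^2$-drift carefully before the factor $Z_t^{3/2}$ from the common denominator clears the fractional powers. There is no probabilistic subtlety once the vanishing of $d\langle P,Z\rangle_t$ is recorded. An alternative that reuses the one-dimensional Ito computations already done for \Cref{theorem:deterministic-decay} is to factor $g=\sqrt{Z_t}\cdot(P_t+Z_t)^{-1}$, apply one-dimensional Ito to each factor (using $d\langle P+Z\rangle_t=\sigma^2P_t^2\,dt+\gamma^2Z_t^2\,dt$, again because the cross term drops), and then combine with the Ito product rule, retaining the $dB_t\,dB_t$ piece of the cross term while the $dW_t\,dB_t$ piece vanishes.
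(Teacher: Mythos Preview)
Your approach is sound and takes a genuinely different route from the paper's. The paper works in two stages: it first sets $X_t=\frac{P_t}{\sqrt{Z_t}}+\sqrt{Z_t}$ and assembles $dX_t$ from the one-dimensional It\^o expansions of $\sqrt{Z_t}$ and $Z_t^{-1/2}$ together with the product rule for $P_t\cdot Z_t^{-1/2}$ (invoking $dP_t\,dZ_t=0$ at that point); it then computes $(dX_t)^2$ and $X_t\,dX_t$ separately and feeds them into the scalar quotient rule $d(1/X_t)=X_t^{-3}\bigl[(dX_t)^2-X_t\,dX_t\bigr]$. Your direct two-dimensional It\^o formula on $g(p,z)=\sqrt{z}/(p+z)$ collapses these two stages into one: the four relevant partials are explicit rational functions times $z^{\pm1/2}$, and the vanishing cross-variation simply deletes the $g_{pz}$ term. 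The paper's staged route has the small practical advantage that the intermediate expressions for $dX_t$, $(dX_t)^2$ and $X_t\,dX_t$ are reused verbatim in the companion \Cref{lemma:derivative-2}, whereas your method would require differentiating a new function there.

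One slip to flag: from your own $g_z=\dfrac{p-z}{2\sqrt{z}(p+z)^2}$, clearing the denominator $\bigl(\tfrac{P_t}{\sqrt{Z_t}}+\sqrt{Z_t}\bigr)^3=(P_t+Z_t)^3Z_t^{-3/2}$ gives the $dZ_t$-coefficient $\dfrac{(P_t-Z_t)(P_t+Z_t)}{2Z_t^2}=-\dfrac{1}{2}\Bigl(1-\dfrac{P_t^2}{Z_t^2}\Bigr)$, i.e.\ with a \emph{minus} sign, not the plus you wrote. That minus is exactly what the paper's own proof obtains in its final display; the plus in the lemma's statement is a typo, as confirmed by the way the lemma is consumed (after an overall sign flip) in the proof of \Cref{theorem:liquidity}. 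So trust your partials, not the displayed claim.
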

\begin{proof}
    By \Cref{eqn:independence-increment},
    \begin{equation*}
        \begin{split}
            d\left(\frac{P_t}{\sqrt{Z_t}}+\sqrt{Z_t}\right)&=\frac{dP_t}{\sqrt{Z_t}}+P_td\frac{1}{\sqrt{Z_t}}+dP_td\frac{1}{\sqrt{Z_t}}+d\sqrt{Z_t}\\
            &=\frac{dP_t}{\sqrt{Z_t}}+P_td\frac{1}{\sqrt{Z_t}}+d\sqrt{Z_t}.
        \end{split}
    \end{equation*}
    Then
    \begin{equation*}
        \begin{split}
            &d\left(\frac{P_t}{\sqrt{Z_t}}+\sqrt{Z_t}\right)d\left(\frac{P_t}{\sqrt{Z_t}}+\sqrt{Z_t}\right)\\
            =&\frac{(dP_t)^2}{Z_t}+P_t^2\left(d\frac{1}{\sqrt{Z_t}}\right)^2+2P_td\sqrt{Z_t}d\frac{1}{\sqrt{Z_t}}+(d\sqrt{Z_t})^2\\
            =&\left(\frac{\sigma^2P_t^2}{Z_t}+\frac{\gamma^2}{4}\frac{P_t^2}{Z_t}-\frac{\gamma^2}{2}P_t+\frac{\gamma^2}{4}Z_t\right)dt.
        \end{split}
    \end{equation*}
    Also,
    \begin{equation*}
        \begin{split}
            &\left(\frac{P_t}{\sqrt{Z_t}}+\sqrt{Z_t}\right)d\left(\frac{P_t}{\sqrt{Z_t}}+\sqrt{Z_t}\right)\\
            =&\left(\frac{P_t}{\sqrt{Z_t}}+\sqrt{Z_t}\right)\left(\frac{dP_t}{\sqrt{Z_t}}+P_td\frac{1}{\sqrt{Z_t}}+d\sqrt{Z_t}\right)\\
            =&\frac{P_t}{Z_t}dP_t+\frac{P_t^2}{\sqrt{Z_t}}\left(-\frac{1}{2}Z_t^{-\frac{3}{2}}dZ_t+\frac{3\gamma^2}{8}Z_t^{-\frac{1}{2}}dt\right)+\frac{P_t}{\sqrt{Z_t}}\left(\frac{1}{2}Z_t^{-\frac{1}{2}}dZ_t\right.\\
            &\left.-\frac{\gamma^2}{8}\sqrt{Z_t}dt\right)+dP_t+P_t\sqrt{Z_t}\left(-\frac{1}{2}Z_t^{-\frac{3}{2}}dZ_t+\frac{3\gamma^2}{8}Z_t^{-\frac{1}{2}}dt\right)\\
            &+\sqrt{Z_t}\left(\frac{1}{2}Z_t^{-\frac{1}{2}}dZ_t-\frac{\gamma^2}{8}\sqrt{Z_t}dt\right)\\
            =&\frac{P_t}{Z_t}dP_t-\frac{1}{2}\frac{P_t^2}{Z_t^2}dZ_t+\frac{3\gamma^2}{8}\frac{P_t^2}{Z_t}dt+\frac{1}{2}\frac{P_t}{Z_t}dZ_t-\frac{\gamma^2}{8}P_tdt+dP_t\\
            &-\frac{1}{2}\frac{P_t}{Z_t}dZ_t+\frac{3\gamma^2}{8}P_tdt+\frac{1}{2}dZ_t-\frac{\gamma^2}{8}Z_tdt\\
            =&\left(1+\frac{P_t}{Z_t}\right)dP_t+\frac{1}{2}\left(1-\frac{P_t^2}{Z_t^2}\right)dZ_t+\frac{\gamma^2}{8}\left(\frac{3P_t^2}{Z_t}+2P_t-Z_t\right)dt
        \end{split}
    \end{equation*}
    By the quotient rule for stochastic process,
    {\allowdisplaybreaks
    \begin{align*}
        &d\frac{1}{\frac{P_t}{\sqrt{Z_t}}+\sqrt{Z_t}}\\
        =&\frac{1}{\frac{P_t}{\sqrt{Z_t}}+\sqrt{Z_t}}\left[-\frac{d\left(\frac{P_t}{\sqrt{Z_t}}+\sqrt{Z_t}\right)}{\frac{P_t}{\sqrt{Z_t}}+\sqrt{Z_t}}+\frac{d\left(\frac{P_t}{\sqrt{Z_t}}+\sqrt{Z_t}\right)d\left(\frac{P_t}{\sqrt{Z_t}}+\sqrt{Z_t}\right)}{\left(\frac{P_t}{\sqrt{Z_t}}+\sqrt{Z_t}\right)^2}\right]\\
        =&\frac{d\left(\frac{P_t}{\sqrt{Z_t}}+\sqrt{Z_t}\right)d\left(\frac{P_t}{\sqrt{Z_t}}+\sqrt{Z_t}\right)-\left(\frac{P_t}{\sqrt{Z_t}}+\sqrt{Z_t}\right)d\left(\frac{P_t}{\sqrt{Z_t}}+\sqrt{Z_t}\right)}{\left(\frac{P_t}{\sqrt{Z_t}}+\sqrt{Z_t}\right)^3}\\
        =&\frac{\left(\frac{\sigma^2P_t^2}{Z_t}+\frac{\gamma^2}{4}\frac{P_t^2}{Z_t}-\frac{\gamma^2}{2}P_t+\frac{\gamma^2}{4}Z_t\right)dt-\frac{\gamma^2}{8}\left(\frac{3P_t^2}{Z_t}+2P_t-Z_t\right)dt}{\left(\frac{P_t}{\sqrt{Z_t}}+\sqrt{Z_t}\right)^3}\\
        &-\frac{\left(1+\frac{P_t}{Z_t}\right)dP_t+\frac{1}{2}\left(1-\frac{P_t^2}{Z_t^2}\right)dZ_t}{\left(\frac{P_t}{\sqrt{Z_t}}+\sqrt{Z_t}\right)^3}\\
        =&\frac{\left(\frac{\sigma^2P_t^2}{Z_t}-\frac{\gamma^2}{8}\frac{P_t^2}{Z_t}-\frac{3\gamma^2}{4}P_t+\frac{3\gamma^2}{8}Z_t\right)dt-\left(1+\frac{P_t}{Z_t}\right)dP_t-\frac{1}{2}\left(1-\frac{P_t^2}{Z_t^2}\right)dZ_t}{\left(\frac{P_t}{\sqrt{Z_t}}+\sqrt{Z_t}\right)^3}.
    \end{align*}
    }
\end{proof}
\begin{lemma}
\label{lemma:derivative-2}
    \begin{equation*}
        \begin{split}
            d\frac{1}{\left(\frac{P_t}{\sqrt{Z_t}}+\sqrt{Z_t}\right)^2}&=\frac{4\sigma^2\left(1+\frac{P_t}{Z_t}\right)^2P_t^2dt+\gamma^2\left(1-\frac{P_t^2}{Z_t^2}\right)Z_t^2dt}{\left(\frac{P_t}{\sqrt{Z_t}}+\sqrt{Z_t}\right)^6}\\
            &-\frac{(\sigma^2+\gamma^2)\frac{P_t^2}{Z_t}dt+2\left(1+\frac{P_t}{Z_t}\right)dP_t+\left(1-\frac{P_t^2}{Z_t^2}\right)dZ_t}{\left(\frac{P_t}{\sqrt{Z_t}}+\sqrt{Z_t}\right)^4}
        \end{split}
    \end{equation*}
\end{lemma}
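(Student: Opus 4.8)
The plan is to reduce the whole computation to a single application of the quotient rule for semimartingales, mirroring the proof of \Cref{lemma:derivative}. The key preliminary observation is that $\frac{P_t}{\sqrt{Z_t}}+\sqrt{Z_t}=\frac{P_t+Z_t}{\sqrt{Z_t}}$, so that
\[
    h_t:=\left(\frac{P_t}{\sqrt{Z_t}}+\sqrt{Z_t}\right)^2=\frac{P_t^2}{Z_t}+2P_t+Z_t .
\]
It therefore suffices to compute $d(1/h_t)$, and the quotient rule gives $d\frac{1}{h_t}=-\frac{dh_t}{h_t^2}+\frac{(dh_t)^2}{h_t^3}$, where $h_t^2=\left(\frac{P_t}{\sqrt{Z_t}}+\sqrt{Z_t}\right)^4$ and $h_t^3=\left(\frac{P_t}{\sqrt{Z_t}}+\sqrt{Z_t}\right)^6$ are exactly the two denominators appearing in the claim.

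First I would compute $dh_t$. The contributions of $2\,dP_t$ and $dZ_t$ are immediate, so the only real work is $d\frac{P_t^2}{Z_t}$, obtained by applying Ito's formula to the map $(p,z)\mapsto p^2/z$ evaluated along $(P_t,Z_t)$. Here \Cref{eqn:independence-increment} is essential: since $dP_t\,dZ_t=0$, the mixed second-order term vanishes, and with $(dP_t)^2=\sigma^2P_t^2\,dt$ and $(dZ_t)^2=\gamma^2Z_t^2\,dt$ one arrives at
\[
    dh_t=2\left(1+\frac{P_t}{Z_t}\right)dP_t+\left(1-\frac{P_t^2}{Z_t^2}\right)dZ_t+(\sigma^2+\gamma^2)\frac{P_t^2}{Z_t}\,dt .
\]
Substituting this into $-dh_t/h_t^2$ already produces the entire second fraction of the claimed identity.

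Next I would square $dh_t$, keeping only $O(dt)$ terms. The cross term again dies because $dP_t\,dZ_t=0$, leaving
\[
    (dh_t)^2=4\left(1+\frac{P_t}{Z_t}\right)^2(dP_t)^2+\left(1-\frac{P_t^2}{Z_t^2}\right)^2(dZ_t)^2=\left[4\sigma^2\left(1+\frac{P_t}{Z_t}\right)^2P_t^2+\gamma^2\left(1-\frac{P_t^2}{Z_t^2}\right)^2Z_t^2\right]dt ,
\]
and dividing by $h_t^3$ gives the first fraction; adding the two fractions completes the proof.

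The main --- and essentially only --- obstacle is the routine but error-prone Ito bookkeeping in $d(P_t^2/Z_t)$: fixing the sign and coefficient of the $dt$ drift and correctly discarding the vanishing quadratic covariation $dP_t\,dZ_t$. No conceptual difficulty arises. As an independent check one can instead write $1/h_t=(1/g_t)^2$ with $g_t:=\frac{P_t}{\sqrt{Z_t}}+\sqrt{Z_t}$, apply Ito's formula for $x\mapsto x^2$, and substitute the expression for $d(1/g_t)$ from \Cref{lemma:derivative}; the two routes must agree.
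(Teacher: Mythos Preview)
Your proposal is correct and follows the paper's own route: apply the stochastic quotient rule $d(1/h_t)=-dh_t/h_t^2+(dh_t)^2/h_t^3$ with $h_t=\bigl(P_t/\sqrt{Z_t}+\sqrt{Z_t}\bigr)^2$, then identify the two fractions. The only difference is cosmetic: the paper obtains $dh_t$ via $d(g_t^2)=2g_t\,dg_t+(dg_t)^2$, recycling the intermediate quantities $\bigl(\tfrac{P_t}{\sqrt{Z_t}}+\sqrt{Z_t}\bigr)d\bigl(\tfrac{P_t}{\sqrt{Z_t}}+\sqrt{Z_t}\bigr)$ and $\bigl(d(\tfrac{P_t}{\sqrt{Z_t}}+\sqrt{Z_t})\bigr)^2$ already computed in the proof of \Cref{lemma:derivative}, whereas you first simplify $h_t=P_t^2/Z_t+2P_t+Z_t$ and apply It\^o to $(p,z)\mapsto p^2/z$ directly. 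Your shortcut is slightly cleaner and avoids reassembling the pieces from the previous lemma; the paper's version has the advantage of reusing work already done. Both land on the same $dh_t$ and $(dh_t)^2$.

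One remark: your computation of $(dh_t)^2$ yields $\gamma^2\bigl(1-P_t^2/Z_t^2\bigr)^{2}Z_t^2\,dt$, with the square, which is what the paper's proof also obtains; the missing exponent in the displayed statement of \Cref{lemma:derivative-2} is a typo there, not an error on your part.
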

\begin{proof}
    By the quotient rule for stochastic process,
    \begin{equation*}
        d\frac{1}{\left(\frac{P_t}{\sqrt{Z_t}}+\sqrt{Z_t}\right)^2}=-\frac{d\left(\frac{P_t}{\sqrt{Z_t}}+\sqrt{Z_t}\right)^2}{\left(\frac{P_t}{\sqrt{Z_t}}+\sqrt{Z_t}\right)^4}+\frac{d\left(\frac{P_t}{\sqrt{Z_t}}+\sqrt{Z_t}\right)^2d\left(\frac{P_t}{\sqrt{Z_t}}+\sqrt{Z_t}\right)^2}{\left(\frac{P_t}{\sqrt{Z_t}}+\sqrt{Z_t}\right)^6}
    \end{equation*}
    We evaluate
    {\allowdisplaybreaks
    \begin{align*}
        &d\left(\frac{P_t}{\sqrt{Z_t}}+\sqrt{Z_t}\right)^2\\
        =&2\left(\frac{P_t}{\sqrt{Z_t}}+\sqrt{Z_t}\right)d\left(\frac{P_t}{\sqrt{Z_t}}+\sqrt{Z_t}\right)+d\left(\frac{P_t}{\sqrt{Z_t}}+\sqrt{Z_t}\right)d\left(\frac{P_t}{\sqrt{Z_t}}+\sqrt{Z_t}\right)\\
        =&2\left(1+\frac{P_t}{Z_t}\right)dP_t+\left(1-\frac{P_t^2}{Z_t^2}\right)dZ_t+\frac{\gamma^2}{4}\left(\frac{3P_t^2}{Z_t}+2P_t-Z_t\right)dt\\
        &+\left(\frac{\sigma^2P_t^2}{Z_t}+\frac{\gamma^2}{4}\frac{P_t^2}{Z_t}-\frac{\gamma^2}{2}P_t+\frac{\gamma^2}{4}Z_t\right)dt\\
        =&(\sigma^2+\gamma^2)\frac{P_t^2}{Z_t}dt+2\left(1+\frac{P_t}{Z_t}\right)dP_t+\left(1-\frac{P_t^2}{Z_t^2}\right)dZ_t.
    \end{align*}
    }
    Then
    {\allowdisplaybreaks
    \begin{align*}
        &d\left(\frac{P_t}{\sqrt{Z_t}}+\sqrt{Z_t}\right)^2d\left(\frac{P_t}{\sqrt{Z_t}}+\sqrt{Z_t}\right)^2\\
        =&\left[4\sigma^2\left(1+\frac{P_t}{Z_t}\right)^2P_t^2+\gamma^2\left(1-\frac{P_t^2}{Z_t^2}\right)^2Z_t^2\right]dt.
    \end{align*}
    }
\end{proof}
The proof of \Cref{theorem:liquidity} is as follows.
\begin{proof}
    By \Cref{eqn:liquidity-update}
    {\allowdisplaybreaks
    \begin{align*}
        dL_t=&\frac{\frac{P_{t+dt}}{\sqrt{Z_{t+dt}}}-\frac{P_{t+dt}}{\sqrt{Z_t}}+\sqrt{Z_{t+dt}}-\sqrt{Z_t}}{\frac{P_{t+dt}}{\sqrt{Z_{t+dt}}}+\sqrt{Z_{t+dt}}}\frac{L_t}{\sqrt{\alpha}-1}\\
        =&\frac{L_t}{\sqrt{\alpha}-1}-\frac{\frac{P_{t+dt}}{\sqrt{Z_t}}+\sqrt{Z_t}}{\frac{P_{t+dt}}{\sqrt{Z_{t+dt}}}+\sqrt{Z_{t+dt}}}\frac{L_t}{\sqrt{\alpha}-1}\\
        =&\frac{L_t}{\sqrt{\alpha}-1}-\frac{\frac{P_t}{\sqrt{Z_t}}+\sqrt{Z_t}}{\frac{P_{t+dt}}{\sqrt{Z_{t+dt}}}+\sqrt{Z_{t+dt}}}\frac{L_t}{\sqrt{\alpha}-1}-\frac{\frac{dP_t}{\sqrt{Z_t}}}{\frac{P_{t+dt}}{\sqrt{Z_{t+dt}}}+\sqrt{Z_{t+dt}}}\frac{L_t}{\sqrt{\alpha}-1}\\
        =&\left(\frac{P_t}{\sqrt{Z_t}}+\sqrt{Z_t}\right)\left(\frac{1}{\frac{P_t}{\sqrt{Z_t}}+\sqrt{Z_t}}-\frac{1}{\frac{P_{t+dt}}{\sqrt{Z_{t+dt}}}+\sqrt{Z_{t+dt}}}\right)\frac{L_t}{\sqrt{\alpha}-1}\\
        &-\frac{\frac{dP_t}{\sqrt{Z_t}}}{\frac{P_{t+dt}}{\sqrt{Z_{t+dt}}}+\sqrt{Z_{t+dt}}}\frac{L_t}{\sqrt{\alpha}-1}\\
        =&-\left(\frac{P_t}{\sqrt{Z_t}}+\sqrt{Z_t}\right)d\left(\frac{1}{\frac{P_t}{\sqrt{Z_t}}+\sqrt{Z_t}}\right)\frac{L_t}{\sqrt{\alpha}-1}-\frac{\frac{dP_t}{\sqrt{Z_t}}}{\frac{P_{t+dt}}{\sqrt{Z_{t+dt}}}+\sqrt{Z_{t+dt}}}\frac{L_t}{\sqrt{\alpha}-1}\\
        =&\left[\frac{\left({\frac{\gamma^2}{8}\frac{P_t^2}{Z_t}+\frac{3\gamma^2}{4}P_t-\frac{3\gamma^2}{8}Z_t-\sigma^2\frac{P_t^2}{Z_t}}\right)dt+\left(1+\frac{P_t}{Z_t}\right)dP_t+\frac{1}{2}\left(1-\frac{P_t^2}{Z_t^2}\right)dZ_t}{\left(\frac{P_t}{\sqrt{Z_t}}+\sqrt{Z_t}\right)^2}\right.\\
        &-\left.\frac{\frac{dP_t}{\sqrt{Z_t}}}{\frac{P_{t+dt}}{\sqrt{Z_{t+dt}}}+\sqrt{Z_{t+dt}}}\right]\frac{L_t}{\sqrt{\alpha}-1}.
    \end{align*}
    }
    We calculate
    {\allowdisplaybreaks
    \begin{align*}
        &\frac{\left(1+\frac{P_t}{Z_t}\right)dP_t}{\left(\frac{P_t}{\sqrt{Z_t}}+\sqrt{Z_t}\right)^2}-\frac{\frac{dP_t}{\sqrt{Z_t}}}{\frac{P_{t+dt}}{\sqrt{Z_{t+dt}}}+\sqrt{Z_{t+dt}}}\\
        =&\frac{\left(1+\frac{P_t}{Z_t}\right)dP_t}{\left(\frac{P_t}{\sqrt{Z_t}}+\sqrt{Z_t}\right)^2}-\frac{\frac{dP_t}{\sqrt{Z_t}}\left(\frac{P_{t+dt}}{\sqrt{Z_{t+dt}}}+\sqrt{Z_{t+dt}}\right)}{\left(\frac{P_{t+dt}}{\sqrt{Z_{t+dt}}}+\sqrt{Z_{t+dt}}\right)^2}\\
        =&\frac{\left(1+\frac{P_t}{Z_t}\right)dP_t}{\left(\frac{P_t}{\sqrt{Z_t}}+\sqrt{Z_t}\right)^2}-\frac{\frac{dP_t}{\sqrt{Z_t}}\left[(P_t+dP_t)\left(\frac{1}{\sqrt{Z_t}}+d\frac{1}{\sqrt{Z_t}}\right)+\sqrt{Z_t}+d\sqrt{Z_t}\right]}{\left(\frac{P_{t+dt}}{\sqrt{Z_{t+dt}}}+\sqrt{Z_{t+dt}}\right)^2}\\
        =&\frac{\left(1+\frac{P_t}{Z_t}\right)dP_t}{\left(\frac{P_t}{\sqrt{Z_t}}+\sqrt{Z_t}\right)^2}-\frac{\frac{dP_t}{\sqrt{Z_t}}\left(\frac{P_t}{\sqrt{Z_t}}+\frac{dP_t}{\sqrt{Z_t}}+\sqrt{Z_t}\right)}{\left(\frac{P_{t+dt}}{\sqrt{Z_{t+dt}}}+\sqrt{Z_{t+dt}}\right)^2}\\
        =&\frac{\left(1+\frac{P_t}{Z_t}\right)dP_t}{\left(\frac{P_t}{\sqrt{Z_t}}+\sqrt{Z_t}\right)^2}-\frac{\left(1+\frac{P_t}{Z_t}\right)dP_t}{\left(\frac{P_{t+dt}}{\sqrt{Z_{t+dt}}}+\sqrt{Z_{t+dt}}\right)^2}-\frac{\sigma^2\frac{P_t^2}{Z_t}dt}{\left(\frac{P_{t+dt}}{\sqrt{Z_{t+dt}}}+\sqrt{Z_{t+dt}}\right)^2}\\
        =&-\left(1+\frac{P_t}{Z_t}\right)dP_td\frac{1}{\left(\frac{P_t}{\sqrt{Z_t}}+\sqrt{Z_t}\right)^2}-\frac{\sigma^2\frac{P_t^2}{Z_t}dt}{\left(\frac{P_{t+dt}}{\sqrt{Z_{t+dt}}}+\sqrt{Z_{t+dt}}\right)^2}\\
        =&-\left(1+\frac{P_t}{Z_t}\right)dP_t\left[\frac{4\sigma^2\left(1+\frac{P_t}{Z_t}\right)^2P_t^2dt+\gamma^2\left(1-\frac{P_t^2}{Z_t^2}\right)Z_t^2dt}{\left(\frac{P_t}{\sqrt{Z_t}}+\sqrt{Z_t}\right)^6}\right.\\
        &-\left.\frac{(\sigma^2+\gamma^2)\frac{P_t^2}{Z_t}dt+2\left(1+\frac{P_t}{Z_t}\right)dP_t+\left(1-\frac{P_t^2}{Z_t^2}\right)dZ_t}{\left(\frac{P_t}{\sqrt{Z_t}}+\sqrt{Z_t}\right)^4}\right]\\
        &-\frac{\sigma^2\frac{P_t^2}{Z_t}dt}{\left(\frac{P_{t+dt}}{\sqrt{Z_{t+dt}}}+\sqrt{Z_{t+dt}}\right)^2}\\
        =&2\sigma^2\frac{\left(1+\frac{P_t}{Z_t}\right)^2P_t^2}{\left(1+\frac{P_t}{Z_t}\right)^4Z_t^2}dt-\frac{\sigma^2\frac{P_t^2}{Z_t}}{\left(\frac{P_{t+dt}}{\sqrt{Z_{t+dt}}}+\sqrt{Z_{t+dt}}\right)^2}dt\\
        =&\frac{\sigma^2\frac{P_t^2}{Z_t}}{\left(\frac{P_t}{\sqrt{Z_t}}+\sqrt{Z_t}\right)^2}dt+\sigma^2\frac{P_t^2}{Z_t}\left[\frac{1}{\left(\frac{P_t}{\sqrt{Z_t}}+\sqrt{Z_t}\right)^2}-\frac{1}{\left(\frac{P_{t+dt}}{\sqrt{Z_{t+dt}}}+\sqrt{Z_{t+dt}}\right)^2}\right]dt\\
        =&\frac{\sigma^2\frac{P_t^2}{Z_t}}{\left(\frac{P_t}{\sqrt{Z_t}}+\sqrt{Z_t}\right)^2}dt-\sigma^2\frac{P_t^2}{Z_t}d\frac{1}{\left(\frac{P_t}{\sqrt{Z_t}}+\sqrt{Z_t}\right)^2}dt\\
        =&\frac{\sigma^2\frac{P_t^2}{Z_t}}{\left(\frac{P_t}{\sqrt{Z_t}}+\sqrt{Z_t}\right)^2}dt.
    \end{align*}
    }
    Hence,
    {\allowdisplaybreaks
    \begin{align*}
        dL_t=&\left[\frac{\left({\frac{\gamma^2}{8}\frac{P_t^2}{Z_t}+\frac{3\gamma^2}{4}P_t-\frac{3\gamma^2}{8}Z_t-\sigma^2\frac{P_t^2}{Z_t}}\right)dt+\frac{1}{2}\left(1-\frac{P_t^2}{Z_t^2}\right)dZ_t}{\left(\frac{P_t}{\sqrt{Z_t}}+\sqrt{Z_t}\right)^2}\right.\\
        &+\left.\frac{\sigma^2\frac{P_t^2}{Z_t}}{\left(\frac{P_t}{\sqrt{Z_t}}+\sqrt{Z_t}\right)^2}dt\right]\frac{L_t}{\sqrt{\alpha}-1}\\
        =&\frac{\left({\frac{\gamma^2}{8}\frac{P_t^2}{Z_t}+\frac{3\gamma^2}{4}P_t-\frac{3\gamma^2}{8}Z_t}\right)dt+\frac{1}{2}\left(1-\frac{P_t^2}{Z_t^2}\right)dZ_t}{\left(\frac{P_t}{\sqrt{Z_t}}+\sqrt{Z_t}\right)^2}\frac{L_t}{\sqrt{\alpha}-1}\\
        =&\frac{\left({\frac{\gamma^2}{8}\frac{P_t^2}{Z_t}+\frac{3\gamma^2}{4}P_t-\frac{3\gamma^2}{8}Z_t}\right)dt+\frac{1}{2}\left(1-\frac{P_t^2}{Z_t^2}\right)[\theta(P_t-Z_t)dt+\gamma Z_tdB_t]}{\left(\frac{P_t}{\sqrt{Z_t}}+\sqrt{Z_t}\right)^2}\\
        &\frac{L_t}{\sqrt{\alpha}-1}\\
        =&\frac{\left({\frac{\gamma^2}{8}\frac{P_t^2}{Z_t^2}+\frac{3\gamma^2}{4}\frac{P_t}{Z_t}-\frac{3\gamma^2}{8}}\right)dt+\frac{1}{2}\left(1-\frac{P_t^2}{Z_t^2}\right)\left[\theta\left(\frac{P_t}{Z_t}-1\right)dt+\gamma dB_t\right]}{\left(1+\frac{P_t}{Z_t}\right)^2}\\
        &\frac{L_t}{\sqrt{\alpha}-1}
    \end{align*}
    }
\end{proof}
\section{Statistical estimators} \label{sec:estimators}
We estimate the parameters $(\mu,\sigma,\theta,\gamma)$ in
\begin{equation*}
    \begin{split}
        &dP_t=\mu P_tdt+\sigma P_tdW_t,\\
        &dZ_t=\theta(P_t-Z_t)dt+\gamma Z_tdB_t.
    \end{split}
\end{equation*}
Suppose that we observe $(P_{t_i},Z_{t_i})$ at $i=0,1,2,\cdots,N$ and $\Delta T=t_{i+1}-t_i$ for all $i$.

The geometric Brownian motion has explicit solution
\begin{equation*}
    P_t=P_0\exp\left\{\left(\mu-\frac{1}{2}\sigma^2\right)t+\sigma W_t\right\}.
\end{equation*}
Therefore, we have unbiased estimators
\begin{equation*}
    \begin{split}
        &\frac{1}{N}\sum_{i=0}^{N-1}\ln\frac{P_{t_{i+1}}}{P_{t_i}}\to\left(\mu-\frac{1}{2}\sigma^2\right)\Delta T,\\
        &\frac{1}{N-1}\sum_{i=0}^{N-1}\left(\ln\frac{P_{t_{i+1}}}{P_{t_i}}-\frac{1}{N}\sum_{i=0}^{N-1}\ln\frac{P_{t_{i+1}}}{P_{t_i}}\right)^2\to\sigma^2\Delta T.
    \end{split}
\end{equation*}
Hence, the unbiased estimators for $\mu$ and $\sigma$ are
\begin{equation}
    \begin{split}
        \hat{\mu}=&\frac{1}{N\Delta T}\sum_{i=0}^{N-1}\ln\frac{P_{t_{i+1}}}{P_{t_i}}+\frac{1}{2(N-1)\Delta T}\left[\sum_{i=0}^{N-1}\left(\ln\frac{P_{t_{i+1}}}{P_{t_i}}\right)^2\right.\\
        &\left.-\frac{1}{N}\left(\sum_{i=0}^{N-1}\ln\frac{P_{t_{i+1}}}{P_{t_i}}\right)^2\right],\\
        \hat{\sigma}^2=&\frac{1}{(N-1)\Delta T}\left[\sum_{i=0}^{N-1}\left(\ln\frac{P_{t_{i+1}}}{P_{t_i}}\right)^2-\frac{1}{N}\left(\sum_{i=0}^{N-1}\ln\frac{P_{t_{i+1}}}{P_{t_i}}\right)^2\right].
    \end{split}
\end{equation}
For $\theta$ and $\gamma$, we use Euler-Maruyama scheme and maximum likelihood estimation (MLE).

Discretize
\begin{equation*}
    Z_{t_{i+1}}\approx Z_{t_i}+\theta(P_{t_i}-Z_{t_i})\Delta T+\gamma Z_{t_i}\sqrt{\Delta T}\varepsilon_i,
\end{equation*}
where $\varepsilon_i$'s are independent standard Gaussian.

Conditioned on $P_{t_i}$, $Z_{t_i}$, $\theta$, and $\gamma$, $Z_{t_{i+1}}$ is approximately Gaussian with mean $Z_{t_i}+\theta(P_{t_i}-Z_{t_i})\Delta T$ and variance $\gamma^2Z_{t_i}^2\Delta T$, i.e.
\begin{equation*}
    f(Z_{t_{i+1}}|P_{t_i},Z_{t_i},\theta,\gamma)\approx\frac{1}{\sqrt{2\pi\Delta T}\gamma Z_{t_i}}\exp\left\{-\frac{[Z_{t_{i+1}}-Z_{t_i}-\theta(P_{t_i}-Z_{t_i})\Delta T]^2}{2\gamma^2Z_{t_i}^2\Delta T}\right\}.
\end{equation*}
The minus log likelihood function is
\begin{equation*}
    \begin{split}
        -l(\theta,\gamma)=&\sum_{i=0}^{N-1}\log(\sqrt{2\pi\Delta T}Z_{t_i})+N\log\gamma\\
        &+\sum_{i=0}^{N-1}\frac{[Z_{t_{i+1}}-Z_{t_i}-\theta(P_{t_i}-Z_{t_i})\Delta T]^2}{2\gamma^2Z_{t_i}^2\Delta T}.
    \end{split}
\end{equation*}
Then
\begin{equation*}
    \begin{split}
        &-\frac{\partial l}{\partial\theta}=-\sum_{i=0}^{N-1}\frac{(P_{t_i}-Z_{t_i})[Z_{t_{i+1}}-Z_{t_i}-\theta(P_{t_i}-Z_{t_i})\Delta T]}{\gamma^2Z_{t_i}^2}\\
        &-\frac{\partial l}{\partial\gamma}=\frac{N}{\gamma}-\sum_{i=0}^{N-1}\frac{[Z_{t_{i+1}}-Z_{t_i}-\theta(P_{t_i}-Z_{t_i})\Delta T]^2}{\gamma^3Z_{t_i}^2\Delta T}.
    \end{split}
\end{equation*}
Setting the above to $0$,
\begin{equation}
    \begin{split}
        &\hat{\theta}=\frac{\sum_{i=0}^{N-1}\frac{(Z_{t_{i+1}}-Z_{t_i})(P_{t_i}-Z_{t_i})}{Z_{t_i}^2}}{\Delta T\sum_{i=0}^{N-1}\frac{(P_{t_i}-Z_{t_i})^2}{Z_{t_i}^2}},\\
        &\hat{\gamma}^2=\frac{\sum_{i=0}^{N-1}\frac{(Z_{t_{i+1}}-Z_{t_i})^2}{Z_{t_i}^2}\sum_{i=0}^{N-1}\frac{(P_{t_i}-Z_{t_i})^2}{Z_{t_i}^2}-\left[\sum_{i=0}^{N-1}\frac{(Z_{t_{i+1}}-Z_{t_i})(P_{t_i}-Z_{t_i})}{Z_{t_i}^2}\right]^2}{N\Delta T\sum_{i=0}^{N-1}\frac{(P_{t_i}-Z_{t_i})^2}{Z_{t_i}^2}}.
    \end{split}
\end{equation}
By Cauchy-Schwarz, the estimator $\hat{\gamma}^2$ is non-negative as expected.
\end{appendices}
\end{document}